\newcommand{\fa}{\forall\,}
\newcommand{\lp}{\left(}
\newcommand{\rp}{\right)}
\newcommand{\len}{\left\|}
\newcommand{\ren}{\right\|}
\newcommand{\lcb}{\left\{}
\newcommand{\rcb}{\right\}}
\newcommand{\zv}{{\bm{0}}}
\newcommand{\mbx}{\mathbf{x}}
\newcommand{\mbp}{\mathbf{p}}
\newcommand{\cA}{\mathcal{A}}
\newcommand{\cB}{\mathcal{B}}
\newcommand{\cC}{\mathcal{C}}
\newcommand{\cF}{\mathcal{F}}
\newcommand{\cI}{\mathcal{I}}
\newcommand{\cS}{\mathcal{S}}
\newcommand{\cT}{\mathcal{T}}
\newcommand{\cU}{\mathcal{U}}
\newcommand{\cX}{\mathcal{X}}
\newcommand{\bR}{\mathbb{R}}
\theoremstyle{definition}
\newtheorem{remark}{Remark}
\newtheorem{problem}{Problem}
\newtheorem{theorem}{Theorem}
\title{\LARGE \bf Guarding a Target Area from a Heterogeneous Group of Cooperative Attackers}
\author{Yoonjae Lee \and Goutam Das \and Daigo Shishika \and Efstathios Bakolas 
\thanks{Y. Lee and E. Bakolas are with the Department of Aerospace Engineering
and Engineering Mechanics, The University of Texas at Austin,
Austin, Texas 78712-1221, USA. Emails: \texttt{yol033@utexas.edu; bakolas@austin.utexas.edu}}%
\thanks{G. Das and D. Shishika are with the College of Engineering and Computing, George Mason
University, Fairfax, VA 22033, USA. Emails: \texttt{gdas@gmu.edu; dshishik@gmu.edu}}
}
\begin{document}

\maketitle
\thispagestyle{empty}
\pagestyle{empty}

\begin{abstract} 
In this paper, we investigate a multi-agent target guarding problem in which a single defender seeks to capture multiple attackers aiming to reach a high-value target area. In contrast to previous studies, the attackers herein are assumed to be \emph{heterogeneous} in the sense that they have not only different speeds but also different weights representing their respective degrees of importance (e.g., the amount of allocated resources). The objective of the attacker team is to jointly minimize the weighted sum of their final levels of proximity to the target area, whereas the defender aims to maximize the same value. Using geometric arguments, we construct candidate equilibrium control policies that require the solution of a (possibly nonconvex) optimization problem. Subsequently, we validate the optimality of the candidate control policies using parametric optimization techniques. Lastly, we provide numerical examples to illustrate how cooperative behaviors emerge within the attacker team due to their heterogeneity.
\end{abstract}

\section{Introduction}

Target guarding games represent a challenging category of dynamic optimization problems that involve multiple mobile decision makers with conflicting objectives. Typically, the decision makers are classified as attackers and defenders. Defenders seek to capture attackers who attempt to reach a high-value target area while avoiding or delaying capture. These (reach-avoid) problems have gained significant attention in recent years within the fields of controls, robotics, and various engineering domains. The heightened interest can be attributed to the growing demand of autonomy in both civilian and military applications, such as the Counter-Unmanned Aircraft Systems (C-UAS) technologies \cite{cuas}.

A simple version of a target guarding game has been first introduced and solved using geometric methods by Issacs in his seminal work \cite{isaacs1965differential}. In recent years, numerous variations of this game have been explored, ranging from area/perimeter defense \cite{adler,shishika2020cooperative,9743548,VonMoll2020BD,yan_matching,garcia_multiple,lee_cdc,lee2021guarding,moll_circ,FU2023110811,lee_2024,DOROTHY2024111587} and reconnaissance \cite{liang_recon,lee_recon} to the defense of moving \cite{Das2022,deng_flow} and maneuvering \cite{LIANG201958,8340791} targets, each offering unique insights and solutions. For a recent survey on this topic, we refer the reader to \cite{10.3389/fcteg.2022.1093186}.

In the majority of cases, the number of defenders either equals or exceeds the number of attackers. However, there is relatively limited literature on scenarios where attackers outnumber defenders and where one defender can capture multiple attackers in a sequential manner. A few articles devoted to such a class of games are as follows. Scenarios with a single defender protecting a half-planar target area from two cooperative attackers have been examined in \cite{9029340,yan2021cooperative,deng_sing}. A more general scenario involving an arbitrary number of noncooperative attackers has been studied in \cite{fu2021strategies}. Solution methods for optimizing capture order have been developed in \cite{zepp2022autonomous}. Algorithmic perspectives on a similar class of games have been explored in \cite{pourghorban2022target} and \cite{bajaj_tro}.

In sharp contrast to the key references above, this paper focuses on scenarios where an arbitrary number of attackers characterized by \emph{heterogeneous} speed ratios and degrees of importance collaborate to minimize the weighted sum of their final levels of proximity to a convex target area. This problem appears to be extremely challenging since the equilibrium control policies of the agents can only be obtained through, despite the convexity of the target area, the globally optimal solution of a possibly nonconvex optimization problem.

The main contributions of this paper can be summarized as follows: (i) Generalization of solutions developed in prior works \cite{9029340,yan2021cooperative,fu2021strategies} to scenarios where the number of attackers and the dimension and shape of the target area can be arbitrary; (ii) Consideration of heterogeneous characteristics (specifically, varying speed ratios and resource-related weights) within the attacker team; (iii) Rigorous validation of geometrically constructed equilibrium control policies and the associated nonconvex optimization problem for finding optimal capture points; and (iv) Provision of nontrivial numerical examples demonstrating how cooperative behaviors emerge within the attacker team as a result of their heterogeneity.

The remainder of the paper is structured as follows. In Section \ref{sec:probform}, a target guarding game involving a team of cooperative and heterogeneous attackers is formulated. In Section \ref{sec:1v1}, a solution to the special case with only one attacker is briefly introduced. In Section \ref{sec:main}, the main results of the paper are presented. In Section \ref{sec:sim}, illustrative numerical simulations are offered.
Finally, concluding remarks are provided in Section \ref{sec:concl}.

\section{Problem Formulation} \label{sec:probform}

\subsection{Notation}

In this paper, we denote by $\|\cdot\|$ the two norm of a vector and by $[n]$ the set $\{1,\dots,n\}$, where $n$ is a positive integer.

\subsection{Problem Statement} \label{sec:game}

Consider a game involving a single defender ($D$) and $m$ attackers ($A_1,\dots,A_m$) interacting in $\bR^n$, as illustrated in Figure \ref{fig:illustration}. Let us denote the attacker team as $A$. The target area that $A$ and $D$ attempt to reach and guard, respectively, is given as
\begin{align}
    \cT = \lcb x \in \bR^n : h(x) \leq 0 \rcb,
\end{align}
where $h : \bR^n \rightarrow \bR$ is a twice continuously differentiable and convex function whose value indicates the level of proximity of a point $x \in \bR^n \backslash \cT$ to $\cT$ (e.g., $h(x) = \|x\|^2 - r^2$ if $\cT$ is a Euclidean ball with some radius $r \geq 0$). Suppose each agent has simple-motion kinematics:
\begin{subequations} \label{eq:kinematics}
    \begin{align}
        \dot x_{A_i} &= \nu_i u_i, & x_{A_i}(t_0) &= x_{A_i}^0, & i \in [m],
        \\
        \dot x_D &= v, & x_D(t_0) &= x_{D}^0,
    \end{align}
\end{subequations}
where $x_{A_i},x_D \in \bR^n$, $x_{A_i}^0,x_{D}^0 \in \bR^n$, and $u_i,v \in \cU = \{ \xi \in \bR^n : \|\xi\| \leq 1 \}$ denote the position, initial position, and control input of $A_i$ and $D$, respectively, and $\nu_i \in (0,1)$ denotes the speed ratio of $A_i$ and $D$. Note that $D$ moves with unitary speed and is faster than all attackers.

Feedback (perfect state) information pattern is assumed, where both $A$ and $D$ can perfectly measure the global position (or state) $\mbx = [x_{A_1}^\top,\dots,x_{A_m}^\top,x_D^\top]^\top$ at all times and adopt (piecewise-continuous) feedback strategies $\gamma_{A} = (\gamma_{A_1},\dots,\gamma_{A_m}) : \bR^{n(m+1)} \rightarrow \cU^m$ and $\gamma_D : \bR^{n(m+1)} \rightarrow \cU$. For notational brevity, let $\mbx_{i:j} = [x_{A_i}^\top,\dots,x_{A_j}^\top,x_D^\top]^\top$ for any $i,j \in [m]$ with $i \leq j$ (when $i=j$, we simply write $\mbx_i$).

Under the assumption that the indexing of $A$ corresponds to the order in which they will be captured, the performance index of the game, which is to be jointly minimized by $A$ and solely maximized by $D$, is defined to be
\begin{align} \label{eq:payoff}
    J(\gamma_A,\gamma_D;\mbx^0) = \sum\nolimits_{i=1}^m \theta_i h(x_{A_i}(t_i)),
\end{align}
where $\mbx^0$ is the initial state of the game, $t_i = \inf \{t \geq t_{i-1} : \mbx(t) \in \cC_i \}$ is the capture time for $A_i$ with
\begin{align} \label{eq:terset}
    \cC_i = \lcb \mbx : \|x_{A_i}-x_D\| = 0 ~\textrm{and}~ h(x_{A_i}) > 0 \rcb,
\end{align}
and $\theta_1, \dots, \theta_m$ are scalar parameters representing the attacker's respective degrees of importance (e.g., the amount of allocated resources), which satisfy $\theta_i > 0$ (for all $i \in [m]$) and $\sum_{i=1}^m \theta_i = 1$.

\begin{figure}
    \centering
    \includegraphics[scale=0.8]{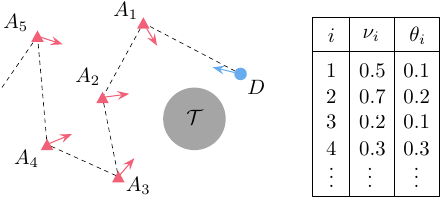}
    \caption{Game of protecting a target area ($\cT$) from attackers ($A_i$) with heterogeneous speed ratios ($\nu_i$) and weights ($\theta_i$). The dashed lines depict the order of capture.}
    \label{fig:illustration}
\end{figure}

The main problem of the paper is stated below.

\begin{problem} \label{prob:main}
    For the game defined by \eqref{eq:kinematics}, \eqref{eq:payoff}, and \eqref{eq:terset} (for all $i \in [m]$), find the capturable set $\cX \subset \bR^{n(m+1)}$ (i.e., the set of initial states in which $D$ is guaranteed to capture all $m$ attackers outside $\cT$) and the corresponding equilibrium strategy pair $(\gamma_A^\star,\gamma_D^\star)$ satisfying the saddle-point condition
    \begin{align} \label{eq:sadpoint}
        J(\gamma_A^\star,\gamma_D;\mbx^0) \leq J(\gamma_A^\star,\gamma_D^\star;\mbx^0) \leq J(\gamma_A,\gamma_D^\star;\mbx^0),
    \end{align}
    for all admissible $\gamma_A$ and $\gamma_D$, and all $\mbx^0 \in \cX$.
\end{problem}

\section{Preliminary: Single-Attacker Game} \label{sec:1v1}

We briefly review the solution of the single-attacker game (i.e., $m=1$), which has been extensively studied in the literature (see, e.g., \cite{FU2023110811,lee_2024}).

Suppose at $t = t_{m-1}$, all attackers but $A_m$ have been captured. The performance index of the game reduces to
\begin{align} \label{eq:one_payoff}
    J^{(m)} = \Phi^{(m)}(\mbx_m(t_m)),
\end{align}
where $\Phi^{(m)}(\mbx_m) = \theta_m h(x_{A_m})$ is a terminal cost function. Note that the weight $\theta_m$ can be omitted without loss of generality, yet we retain it for consistency with the results to be presented in the following sections.

It follows readily from the first-order necessary conditions for saddle-point equilibria \cite[Theorem~8.2]{bacsar1998dynamic} that the equilibrium trajectory of this game over the regular region (where the Value of the game is continuously differentiable) is a straight line. From this geometrical property, one can deduce that the optimal terminal position (i.e., capture point) of $A_m$ can be found by solving the (convex) optimization problem
\begin{equation} \label{eq:one_main}
    \begin{aligned}
        \min_{x} \quad& \theta_m h(x)
        \\
        \mathrm{s.t.} \quad& \frac{\|x-x_{A_m}\|}{\nu_m} - \|x-x_D\| \leq 0,
    \end{aligned}
\end{equation}
whose feasible set
\begin{align}
    \cA^{(m)}(\mbx_m) = \lcb x : \frac{\|x-x_{A_m}\|}{\nu_m} - \|x-x_D\| \leq 0 \rcb, \nonumber
\end{align}
is referred to as the safe-reachable region of $A_m$.

A key preliminary result (which will be extended later in this paper) is that the (global) optimal value of problem \eqref{eq:one_main}, which can be expressed by
\begin{align}
    V^{(m)}(\mbx_m) = \min_{x} \lcb \theta_m h(x) : x \in \cA^{(m)}(\mbx_m) \rcb,
\end{align}
satisfies the related Isaacs equation \cite{isaacs1965differential} in the classical sense over the capturable set
\begin{align}
    \cX^{(m)} &= \left\{ \mbx_m : V^{(m)}(\mbx_m) > 0 \right\}.
\end{align}
Additionally, the equilibrium strategies establishing a saddle point of the game over $\cX^{(m)}$ can be derived to be
\begin{subequations} \label{eq:eqstr_one}
    \begin{align}
        \gamma_{A_m}^\star(\mbx_m) &= \frac{p_m^\star(\mbx_m)-x_{A_m}}{\|p_m^\star(\mbx_m)-x_{A_m}\|},
        \\
        \gamma_D^\star(\mbx_m) &= \frac{p_m^\star(\mbx_m)-x_D}{\|p_m^\star(\mbx_m)-x_D\|},
    \end{align}
\end{subequations}
where $p_m^\star(\mbx_m)$ denotes the (unique) solution of \eqref{eq:one_main}. In words, if $\mbx_m(t) \in \cX^{(m)}$, then the optimal strategies for both $A_m$ and $D$ are to align their control inputs (i.e., directions) with their respective lines-of-sight to the current optimal capture point $p_m^\star(\mbx_m(t))$.

\section{Main Results} \label{sec:main}

In this section, we present the main results of the paper, whereby we extend the preliminary results provided in the previous section to multi-attacker scenarios.

\subsection{Optimization Formulation of a Geometric Solution}

We begin by extending the straight-line property of the equilibrium trajectory observed in the previous section to the more general case involving a group of attackers with varying speed ratios and weights.

Let us decompose the game described in Problem \ref{prob:main} into a series of $m$ phases, where each $i$-th phase can be seen as a game involving the agents $A_i,\dots,A_m,D$ and occurring within the time interval $[t_i,t_{i+1}]$. The performance indices of all $m$ phases can be recursively defined in a backward manner as follows:
\begin{align}
    J^{(i)} = \Phi^{(i)}(\mbx_{i:m}(t_i)), \qquad \fa i \in [m],
\end{align}
with $\Phi^{(m)}$ as defined below \eqref{eq:one_payoff} and, for all $i \in [m-1]$,
\begin{align} \label{eq:tercost}
    \Phi^{(i)}(\mbx_{i:m}) = \theta_{i} h(x_{A_i}) + V^{(i+1)}(\mbx_{i+1:m}),
\end{align}
where $V^{(i+1)}$ is the Value of the $(i+1)$-th phase. Since all cost functions $\Phi^{(i)}$ are of terminal type, it follows again from \cite[Theorem~8.2]{bacsar1998dynamic} that the equilibrium trajectory in each phase is a straight line.

Let $p_i$ denote the capture point of $A_i$, and let $\mbp_{i:j} = [p_i^\top,\dots,p_j^\top]^\top$ for any $i,j \in [m]$ with $i \leq j$ (when $i=1$ and $j=m$, we simply write $\mbp$). In consideration of the above geometrical property, the optimization problem that each $A_i$ needs to solve individually to find its optimal capture point $p_i$, given information about the decisions made by all preceding attackers (i.e., $p_1,\dots,p_{i-1}$), can be formulated as the following hierarchical optimization problem:
\begin{equation} \label{eq:prob_series}
    \begin{aligned}
        \min_{p_i} \quad& \theta_i h(p_i) + \tilde V^{(i+1)}(p_i)
        \\
        \mathrm{s.t.} \quad& \frac{\|p_i-x_{A_i}\|}{\nu_i} - \sum\nolimits_{j=1}^{i} \|p_j-p_{j-1}\| \leq 0,
    \end{aligned}
\end{equation}
where $\tilde V^{(i+1)}(p_i)$ denotes the optimal value of the above problem with the index $i$ replaced by $i+1$ (which is a function of $p_i$). Note that $p_0 = x_D$. The constraint function here measures the difference between the minimum times needed for $A_i$ and $D$ to reach a point $p_i \in \bR^n$, provided that $D$ has to traverse waypoints $p_1,\dots,p_{i-1}$ before pursuing $A_i$. Hence the feasible set in \eqref{eq:prob_series} can be seen as the time-extended safe-reachable region of $A_i$ (see, e.g., \cite{yan2021cooperative,fu2021strategies} for similar definitions).

Albeit seemingly complex, solutions to the series of these hierarchical optimization problems at phase $i$ can be obtained at once by solving the single-level problem
\begin{equation} \label{eq:main_prob}
    \begin{aligned}
        \min_{\mbp_{i:m}} \quad& f_\theta^{(i)}(\mbp_{i:m}) = \sum\nolimits_{j=i}^m \theta_j h(p_j)
        \\
        \mathrm{s.t.} \quad& g_j^{(i)}(\mbp_{i:m},\mbx_{i:m}) \leq 0, \quad j \in [m] \backslash [i-1],
    \end{aligned} \tag{P}
\end{equation}
where
\begin{align} \label{eq:g_i}
    g_j^{(i)}(\mbp_{i:m},\mbx_{i:m}) = \frac{\|p_j-x_{A_j}\|}{\nu_j} - \sum_{k=i}^j \|p_k-p_{k-1}\|.
\end{align}
Here, we slightly abuse the notation by letting $p_{i-1} = x_D$. In view of the straight-line property, candidate equilibrium strategies for an $i$-th phase can be constructed as
\begin{subequations} \label{eq:eqstr_multi}
    \begin{align} 
        \gamma_{A_j}^{(i) \star}(\mbx_{i:m}) &= \frac{p_{j}^\star(\mbx_{i:m}) - x_{A_i}}{\|p_{j}^\star(\mbx_{i:m}) - x_{A_i}\|}, ~ j \in [m] \backslash [i-1],
        \\
        \gamma_D^{(i) \star}(\mbx_{i:m}) &= \frac{p_i^\star(\mbx_{i:m}) - x_D}{\|p_i^\star(\mbx_{i:m}) - x_D\|},
    \end{align}
\end{subequations}
where $\mbp_{i:m}^\star = [p_i^\star(\mbx_{i:m})^\top,\dots,p_m^\star(\mbx_{i:m})^\top]^\top$ is a global minimizer of problem \eqref{eq:main_prob}.

\begin{remark} \label{rem:first}
    As to be shown in Theorem \ref{theo:multi}, solutions for \eqref{eq:main_prob} will remain unchanged in all subsequent phases along the equilibrium trajectory. Thus, we will henceforth focus mainly on the initial phase of the game (i.e., $i=1$).
\end{remark}

In accordance with Remark \ref{rem:first}, in the rest of this section, we will fix $i=1$ and suppress any superscripts indicating the phase index (e.g., $\Phi(\cdot) = \Phi^{(1)}(\cdot)$).

For the forthcoming analysis, a few definitions are introduced. We define the feasible set of \eqref{eq:main_prob} as
\begin{align}
    \cA(\mbx) = \lcb \mbp \in \bR^{mn} : g_i(\mbp,\mbx) \leq 0, i \in [m] \rcb,
\end{align}
which can be seen as the Cartesian product of the safe-reachable region of $A_1$ and the time-extended safe-reachable regions of $A_2,\dots,A_m$. Let us also define the (global) optimal value function of \eqref{eq:main_prob} (our candidate for the Value function of the game) as
\begin{align} \label{eq:optval}
    V(\mbx) = \min_{\mbp} \lcb f_\theta (\mbp) : \mbp \in \cA(\mbx) \rcb,
\end{align}
and the solution set (i.e., the set of global minimizers) of the same problem as
\begin{align} \label{eq:solset}
    \cS(\mbx) = \lcb \mbp \in \cA(\mbx) : f_\theta (\mbp) = V(\mbx) \rcb.
\end{align}
Lastly, we define the set of (initial) states for which there is no single optimal capture point lying within $\cT$ as follows:
\begin{align} \label{eq:capreg}
    \cX = \Big\{ \mbx : h(p_i) > 0, \fa \mbp \in \cS(\mbx), i \in [m] \Big\}.
\end{align}

Next, we provide a few remarks summarizing some key properties (with sketches of proofs) of problem \eqref{eq:main_prob}.

\begin{remark} \label{rem:sol}
    The solution set $\cS(\mbx)$ is always nonempty and compact. To see this, note first that the feasible set $\cA(\mbx)$ is compact. This can be proven by showing, for instance, that if $\|\mbp\| \rightarrow \infty$, then there exists at least one $i \in [m]$ such that $g_i(\mbp,\mbx) \rightarrow \infty$, i.e., at least one attacker has a bounded (time-extended) safe-reachable region. Since, moreover, the proximity function $h$ is assumed to be continuous, the desired result follows directly from Weierstrass' Theorem \cite[Proposition~A.8]{bertsekas2016nonlinear}. 
\end{remark}

\begin{remark} \label{rem:ncvx}
    For multi-attacker scenarios (i.e., $m \geq 2$), the feasible set $\cA(\mbx)$ is not guaranteed to be convex. To see why, it suffices to note that the constraint functions $g_i$ (except for $i=1$) are concave in $p_j$ for all $j \in [i-1]$ (i.e., problem \eqref{eq:main_prob} is a difference-of-convex program \cite{shen_2016}). This alludes that computing and certifying $\cS(\mbx)$ can be challenging. The stationary points of the same problem can, however, be found efficiently via the convex-concave procedure \cite{shen_2016}.
\end{remark}

\begin{remark} \label{rem:third}
    If $\mbx \in \cX$, any local minimizer of problem \eqref{eq:main_prob} is attained on the boundary of $\cA(\mbx)$, and at such a point, the gradients $\nabla_{\mbp} g_i$ are well-defined (i.e., $p_1 \neq \dots \neq p_m$ and $p_i \neq x_{A_i}$ for all $i \in [m]$) and linearly independent. In addition, removing any constraint will strictly decrease the (global) optimal value of the problem.
\end{remark}

\subsection{Proof of Equilibrium}

In the remainder of this section, we focus on proving the optimality of the candidate equilibrium strategies described by \eqref{eq:eqstr_multi}. This task is equivalent to proving that the (global) optimal value computed by \eqref{eq:optval} equals the Value of the game, i.e., satisfies the related Isaacs equation which is to be provided below.

The Hamiltonian of the game can be defined (over the regular region) as
\begin{align} \label{eq:hamiltonian}
    H(\nabla U,u,v) = (\nabla_{x_D} U)^\top v + \sum\nolimits_{i=1}^m (\nabla_{x_{A_i}} U)^\top u_i,
\end{align}
where $U : \bR^{n(m+1)} \rightarrow \bR$ is the Value function of (the first phase of) the game, and $u = [u_1^\top,\dots,u_m^\top]^\top$. The Isaacs equation associated with the considered game can be written in the form \cite{bacsar1998dynamic}
\begin{subequations} \label{eq:HJI}
    \begin{align}
        \bar{H}(\nabla U) &= 0, & \mbx &\in \cX \backslash \cC_1, \label{eq:hji_main}
        \\
        U(\mbx) &= \Phi(\mbx), & \mbx &\in \partial\cC_1, \label{eq:hji_boundary}
    \end{align}
\end{subequations}
where $\bar{H} = \min_{u} \max_{v} H = \max_{v} \min_{u} H$ is the Hamiltonian evaluated along the equilibrium trajectory, and $\partial \cC_1$ is the boundary of $\cC_1$. An expression for this function can be readily obtained as
\begin{align}
    \bar{H}(\nabla U) = \len \nabla_{x_D} U \ren - \sum\nolimits_{i=1}^m \nu_i \len \nabla_{x_{A_i}} U \ren, \label{eq:opt_hamiltonian}
\end{align}
since it is immediate from \eqref{eq:hamiltonian} that the optimal control inputs of the agents are given by, for all $i \in [m]$,
\begin{align} \label{eq:eqstr_h}
    u_i^*(t) = -\frac{\nabla_{x_{A_i}} U}{\|\nabla_{x_{A_i}} U\|}, \quad v^*(t) = \frac{\nabla_{x_D} U}{\|\nabla_{x_D} U\|}.
\end{align}

To show that $V$ satisfies \eqref{eq:HJI}, it is best to have its gradient in closed form. Unfortunately, this is not always possible since problem \eqref{eq:main_prob} generally has no analytical solution. To address this challenge, we employ a classical (second-order) result from the sensitivity analysis of nonlinear programming \cite[Theorem~3.2.2,~Theorem~3.4.1]{fiacco}. Before introducing this result, a few related definitions are provided. The index set of active constraints at $(\mbp,\mbx)$ is defined as
\begin{align}
    \cI(\mbp,\mbx) = \lcb i \in [m] : g_i(\mbp,\mbx) = 0 \rcb.
\end{align}
The Lagrangian associated with problem \eqref{eq:main_prob} is defined as
\begin{align}
    &L(\mbp,\lambda,\mbx) = f_\theta(\mbp) + \sum\nolimits_{i=1}^m \lambda_i g_i(\mbp,\mbx),
\end{align}
where $\lambda = [\lambda_1,\dots,\lambda_m]^\top \in \bR^m$ is a Lagrange multiplier. The Karush-Kuhn-Tucker (KKT) conditions are \cite{bertsekas2016nonlinear}
\begin{subequations} \label{eq:kkt}
    \begin{align}
        \nabla_{(\mbp,\lambda)} L(\mbp,\lambda,\mbx) &= \zv, \label{eq:kkt_station}
        \\
        \fa i \in [m], ~ \lambda_i &\geq 0, \label{eq:kkt_dualfeas}
        \\
        \sum\nolimits_{i=1}^m \lambda_i g_i(\mbp,\lambda,\mbx) &= 0. \label{eq:kkt_compslack}
    \end{align}
\end{subequations}

\begin{theorem}[\cite{fiacco}] \label{theo:fiacco}
    Let $\bar\mbx \in \cX$. Let $\mbp^*$ be a local minimizer of problem \eqref{eq:main_prob} at $\bar\mbx$. If there exists a multiplier $\lambda^* $ such that the triple $(\mbp^*,\lambda^*,\bar\mbx)$ satisfies all of the conditions below:
    \begin{enumerate}
        \item the functions $f_\theta$ and $g_i$ (for all $i \in [m]$) are twice continuously differentiable in $(\mbp,\mbx)$ near $(\mbp^*,\bar\mbx)$;
        \item Linear Independence Constraint Qualification (LICQ) holds, i.e.,
        \begin{align} \label{eq:LICQ}
            &\nabla_\mbp g_i(\mbp^*,\bar\mbx), ~\fa i \in \cI(\mbp^*,\bar\mbx), \nonumber
            \\
            &\qquad\qquad \textrm{~are~linearly~independent;} \tag{LICQ}
        \end{align}
        \item Strict Complementarity Slackness (SCS) holds, i.e.,
        \begin{align} \label{eq:SCS}
            \lambda_i^* > 0, \quad \fa i \in \cI(\mbp^*,\bar\mbx); \tag{SCS}
        \end{align}
        \item Second-Order Sufficient Condition (SOSC) holds, i.e., $(\mbp^*,\lambda^*)$ satisfies the KKT conditions \eqref{eq:kkt} and
        \begin{align} \label{eq:SOSC}
            y^\top \nabla_{\mbp\mbp}^2 L(\mbp^*,\lambda^*,\bar\mbx) y > 0, ~ \fa y \in \cF(\mbp^*,\bar\mbx), \tag{SOSC}
        \end{align}
        where
        \begin{align*}
            \cF(\mbp,\mbx) = \big\{ y \neq \zv : \nabla_\mbp g_i(\mbp,\mbx)^\top y = 0, i \in \cI(\mbp,\mbx) \big\},
        \end{align*}
    \end{enumerate}
    then there exists a neighborhood $\cB(\bar\mbx,\epsilon) = \{\mbx : \|\mbx-\bar\mbx\|<\epsilon\}$, with $\epsilon > 0$, and continuously differentiable functions $\mbp^\star : \cB(\bar\mbx,\epsilon) \rightarrow \bR^{mn}$ and $\lambda^\star : \cB(\bar\mbx,\epsilon) \rightarrow [0,\infty)^m$ such that at any $\mbx \in \cB(\bar\mbx,\epsilon)$,
    \begin{enumerate}
        \item $\mbp^\star(\mbx) = \mbp^*$ and $\lambda^\star(\mbx) = \lambda^*$, and $\mbp^*$ is a strict local minimizer of problem \eqref{eq:main_prob} at $\bar\mbx$;
        \item the (local) optimal value function $W : \cB(\bar\mbx,\epsilon) \rightarrow \bR$ defined by
        \begin{align} \label{eq:local_optval}
            W(\mbx) = f_\theta(\mbp^\star(\mbx),\mbx)
        \end{align}
        is continuously differentiable and has the gradient
        \begin{align}
            \nabla W(\mbx) &= \nabla_\mbx L(\mbp^*,\lambda^*,\mbx).
        \end{align}
    \end{enumerate}
\end{theorem}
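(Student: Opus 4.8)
The plan is to read the KKT conditions \eqref{eq:kkt} as a parametrized system of nonlinear equations in the primal-dual variables and to solve it locally via the Implicit Function Theorem (IFT), following the classical argument of \cite{fiacco}. First I would exploit \eqref{eq:SCS} to separate active from inactive constraints: writing $\cI = \cI(\mbp^*,\bar\mbx)$, strict complementarity forces $\lambda_i^* = 0$ for $i \notin \cI$ and $\lambda_i^* > 0$ for $i \in \cI$, so near $\bar\mbx$ only the constraints indexed by $\cI$ can be binding. I would then assemble the stationarity equation $\nabla_\mbp L(\mbp,\lambda,\mbx) = \zv$ together with the active-constraint equations $g_i(\mbp,\mbx) = 0$, $i \in \cI$, into a single map $F(\mbp,\lambda_\cI,\mbx) = \zv$ consisting of $mn + |\cI|$ equations in the $mn + |\cI|$ unknowns $(\mbp,\lambda_\cI)$, which is satisfied at the base point $(\mbp^*,\lambda_\cI^*,\bar\mbx)$ by the KKT conditions.

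The crux is to verify that the Jacobian of $F$ with respect to $(\mbp,\lambda_\cI)$, i.e., the bordered (KKT) matrix $M = \left[\begin{smallmatrix} \nabla_{\mbp\mbp}^2 L & G \\ G^\top & 0 \end{smallmatrix}\right]$, where $G$ is the matrix with columns $\nabla_\mbp g_i$ ($i \in \cI$), is nonsingular at $(\mbp^*,\lambda^*,\bar\mbx)$. This is precisely where the two qualifications enter: \eqref{eq:LICQ} gives $G$ full column rank, while \eqref{eq:SOSC} makes $\nabla_{\mbp\mbp}^2 L$ positive definite on the kernel of $G^\top$ (the subspace associated with $\cF(\mbp^*,\bar\mbx)$). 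A standard linear-algebra lemma then shows that a full-rank border together with positive definiteness of the Hessian on that border's kernel is equivalent to invertibility of $M$. I expect this nonsingularity check to be the main obstacle, as it is the only place where LICQ and SOSC interact nontrivially; the remaining steps are comparatively soft continuity arguments.

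Granted the nonsingularity of $M$, and since $f_\theta$ and the $g_i$ are twice continuously differentiable so that $F$ is $C^1$, the IFT supplies a radius $\epsilon > 0$ and unique $C^1$ maps $\mbp^\star,\lambda^\star$ on $\cB(\bar\mbx,\epsilon)$ with $\mbp^\star(\bar\mbx) = \mbp^*$ and $\lambda^\star(\bar\mbx) = \lambda^*$ solving $F = \zv$, where we set $\lambda_i^\star \equiv 0$ for $i \notin \cI$. I would then argue that all four regularity conditions persist after possibly shrinking $\epsilon$: the active constraints remain active by construction and the inactive ones remain strictly negative by continuity; \eqref{eq:SCS}, \eqref{eq:LICQ}, and \eqref{eq:SOSC} are open conditions preserved under the continuity of $\mbp^\star$, $\lambda^\star$, and the relevant derivatives. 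Consequently $(\mbp^\star(\mbx),\lambda^\star(\mbx))$ satisfies the KKT conditions and SOSC at every $\mbx \in \cB(\bar\mbx,\epsilon)$, which certifies $\mbp^\star(\mbx)$ as a strict local minimizer and establishes the first conclusion; moreover $\lambda^\star$ maps into $[0,\infty)^m$.

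Finally, for the gradient of $W(\mbx) = f_\theta(\mbp^\star(\mbx),\mbx)$ I would invoke the envelope theorem. Differentiating via the chain rule gives $\nabla W = \nabla_\mbx f_\theta + (D_\mbx \mbp^\star)^\top \nabla_\mbp f_\theta$; substituting the stationarity identity $\nabla_\mbp f_\theta = -\sum_{i \in \cI} \lambda_i^\star \nabla_\mbp g_i$ and then using that differentiating the active-constraint identities $g_i(\mbp^\star(\mbx),\mbx) \equiv 0$ yields $(D_\mbx \mbp^\star)^\top \nabla_\mbp g_i = -\nabla_\mbx g_i$, the sensitivity terms $D_\mbx \mbp^\star$ cancel and leave $\nabla W(\mbx) = \nabla_\mbx f_\theta + \sum_{i \in \cI} \lambda_i^\star \nabla_\mbx g_i = \nabla_\mbx L(\mbp^\star(\mbx),\lambda^\star(\mbx),\mbx)$. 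Evaluating at $\bar\mbx$, where $\mbp^\star(\bar\mbx) = \mbp^*$ and $\lambda^\star(\bar\mbx) = \lambda^*$, recovers the claimed formula $\nabla_\mbx L(\mbp^*,\lambda^*,\bar\mbx)$ and completes the proof of the second conclusion.
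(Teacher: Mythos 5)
Your proposal is correct: the implicit-function-theorem argument on the system of stationarity plus active-constraint equations, with LICQ and \eqref{eq:SOSC} yielding nonsingularity of the bordered KKT matrix, and the envelope-theorem cancellation for $\nabla W$, is precisely the classical proof of this result in the cited reference \cite{fiacco}; the paper itself states the theorem as an imported result and gives no proof of its own. (One minor caution: nonsingularity of the bordered matrix is only \emph{implied} by, not equivalent to, LICQ plus positive definiteness on the kernel of $G^\top$, but your argument only uses the implication, so nothing breaks.)
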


From Remark \ref{rem:third}, we know that the first three conditions in Theorem \ref{theo:fiacco} automatically hold at any local minimizer of problem \eqref{eq:main_prob}. Thus, it is only \eqref{eq:SOSC} that needs to be satisfied for us to make use of this theorem.

We are now ready to state our main theoretical result, which (partially) answers Problem \ref{prob:main}.

\begin{theorem} \label{theo:multi}
    At any $\mbx \in \cX$ where the set $\cS(\mbx)$ is a singleton,
    \begin{enumerate}
        \item the function $V$ defined in \eqref{eq:optval} is continuously differentiable and satisfies \eqref{eq:HJI};
        \item the control policies described by \eqref{eq:eqstr_multi} are the corresponding equilibrium strategies of the game.
    \end{enumerate}
\end{theorem}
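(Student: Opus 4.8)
The plan is to certify that the candidate value $V$ from \eqref{eq:optval} is a classical solution of the Isaacs system \eqref{eq:HJI} on the regular region, and then to invoke a standard verification argument to conclude optimality of \eqref{eq:eqstr_multi}. The workhorse is Theorem \ref{theo:fiacco}: at a state $\bar\mbx \in \cX$ with $\cS(\bar\mbx)=\{\mbp^*\}$, Remark \ref{rem:third} already supplies hypotheses (1)--(3) (smoothness, \eqref{eq:LICQ}, \eqref{eq:SCS}); moreover, since removing any constraint strictly lowers the optimal value, every constraint is active with a strictly positive multiplier, so $\cI(\mbp^*,\bar\mbx)=[m]$. The only gap is then the second-order condition. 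Once \eqref{eq:SOSC} is in hand, Fiacco's theorem produces a $C^1$ local solution map $\mbp^\star(\cdot)$ and a $C^1$ local optimal-value function $W$ with $\nabla W(\mbx)=\nabla_\mbx L(\mbp^*,\lambda^*,\mbx)$. To promote $W$ to $V$ I would use the singleton hypothesis with Remark \ref{rem:sol}: the map $\mbx\mapsto\cA(\mbx)$ is compact-valued and continuous, so $V$ is continuous and its global-minimizer map is upper semicontinuous; uniqueness of the minimizer then forces $\mbp^\star(\mbx)$ to be the global minimizer for $\mbx$ near $\bar\mbx$, i.e.\ $V\equiv W$ locally and $V$ is $C^1$ near $\bar\mbx$.

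With differentiability secured, the verification of \eqref{eq:hji_main} is a short computation that I expect to collapse cleanly. Since $f_\theta$ is independent of $\mbx$, only the constraints contribute to $\nabla_\mbx L$; because $x_{A_i}$ enters only $g_i$ while $x_D=p_0$ enters every $g_j$ through $\len p_1-x_D\ren$, one obtains
\begin{align*}
    \nabla_{x_{A_i}} V &= -\frac{\lambda_i^*}{\nu_i}\,\frac{p_i^\star - x_{A_i}}{\len p_i^\star - x_{A_i}\ren}, \\
    \nabla_{x_D} V &= \Big(\sum\nolimits_{j=1}^m \lambda_j^*\Big)\frac{p_1^\star - x_D}{\len p_1^\star - x_D\ren}.
\end{align*}
Hence $\nu_i\len\nabla_{x_{A_i}}V\ren=\lambda_i^*$ and $\len\nabla_{x_D}V\ren=\sum_j\lambda_j^*$, so $\bar H(\nabla V)$ in \eqref{eq:opt_hamiltonian} vanishes identically, giving \eqref{eq:hji_main}; feeding these gradients into \eqref{eq:eqstr_h} reproduces exactly \eqref{eq:eqstr_multi} (the sign in $\nabla_{x_{A_i}}V$ cancels the minus in $u_i^*$), which is part (2) of the claim. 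For the boundary condition \eqref{eq:hji_boundary}, on $\partial\cC_1$ one has $x_{A_1}=x_D$; since $\nu_1<1$, feasibility of $g_1\le 0$ then forces $p_1^\star=x_D$, whereupon \eqref{eq:g_i} shows the remaining constraints reduce to those of the $i=2$ phase and $V=\theta_1 h(x_{A_1})+V^{(2)}=\Phi$, matching \eqref{eq:tercost}.

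The main obstacle is verifying \eqref{eq:SOSC}. Writing $\nabla_{\mbp\mbp}^2 L=\nabla_{\mbp\mbp}^2 f_\theta+\sum_i \lambda_i^*\nabla_{\mbp\mbp}^2 g_i$, the first term and the convex attacker terms $\tfrac{1}{\nu_i}\len p_i-x_{A_i}\ren$ contribute positive-semidefinite pieces (block-diagonal Hessians of $h$ and rank-deficient projectors orthogonal to the lines of sight), whereas the concave defender-path terms $-\len p_k-p_{k-1}\ren$ contribute negative-semidefinite ``graph-Laplacian'' pieces weighted by $\Lambda_k=\sum_{j\ge k}\lambda_j^*$. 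The difficulty is that $h$ need not be strictly convex (e.g.\ a half-space target), so strict positivity cannot come from the objective alone and must be extracted on the critical cone $\cF(\mbp^*,\bar\mbx)$. I would compute the active-gradient relations $\nabla_\mbp g_i^{\top}y=0$ --- which, after telescoping successive constraints, couple the components of $y_k-y_{k-1}$ orthogonal to the defender path to the line-of-sight components --- and use them to show the positive projector contributions dominate the concave path terms, yielding $y^\top\nabla_{\mbp\mbp}^2 L\,y>0$ for $y\neq\zv$. Should a clean direct bound prove elusive, the fallback is a contradiction argument: a null direction in $\cF$ would, via the uniqueness encoded in the singleton hypothesis, contradict the strict local minimality of $\mbp^*$.

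Finally, I would close the loop on the two remaining claims. The invariance asserted in Remark \ref{rem:first} follows because, along the equilibrium motion toward the fixed points $\mbp^\star$, each $g_i$ evaluated at $\mbp^\star$ loses the same budget from the attacker side ($\len p_i^\star-x_{A_i}\ren/\nu_i$) and the defender side ($\len p_1^\star-x_D\ren$), so $g_i(\mbp^\star,\mbx(t))$ is constant; continuity of $\mbp^\star(\cdot)$ and uniqueness then keep $\mbp^\star(\mbx(t))=\mbp^\star$ throughout the phase, and at $t_1$ the phase-$2$ problem inherits precisely $(p_2^\star,\dots,p_m^\star)$, so the solution is unchanged across phases. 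Having shown that $V\in C^1$ solves \eqref{eq:HJI} with the strategies \eqref{eq:eqstr_multi} realizing the pointwise saddle point of the Hamiltonian, a standard differential-game verification theorem --- integrating $\tfrac{d}{dt}V(\mbx(t))$ along arbitrary admissible play and using the two inequalities built into $\bar H=\min_u\max_v H=\max_v\min_u H$ --- yields the saddle-point inequalities \eqref{eq:sadpoint} and identifies $V$ with the Value, completing the proof.
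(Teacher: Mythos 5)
Your route is the same as the paper's: invoke Theorem \ref{theo:fiacco} at the singleton global minimizer, compute $\nabla V(\mbx) = \nabla_\mbx L(\mbp^\star,\lambda^\star,\mbx)$ blockwise (only the constraints contribute, $x_{A_i}$ enters only $g_i$, and $x_D$ enters every $g_j$ through $\|p_1 - x_D\|$), observe that $\bar H(\nabla V) = \left|\sum_i \lambda_i^\star\right| - \sum_i |\lambda_i^\star| = 0$ by dual feasibility, recover \eqref{eq:eqstr_multi} from \eqref{eq:eqstr_h}, and verify \eqref{eq:hji_boundary} by noting that $x_{A_1} = x_D$ forces $p_1^\star = x_D$ and collapses \eqref{eq:main_prob} into the phase-two problem. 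All of these computations in your proposal are correct and coincide with the paper's. Your two additions --- promoting Fiacco's local value function $W$ to the global $V$ via compactness of $\cA(\mbx)$, semicontinuity of the minimizer map, and uniqueness, and the closing verification-theorem integration that converts the Isaacs equation into the saddle-point inequalities \eqref{eq:sadpoint} --- are sound refinements of steps the paper leaves implicit.

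The genuine gap is \eqref{eq:SOSC}. You correctly isolate it as the crux, but neither of your proposed routes closes it. The direct route is only a statement of intent: you describe the structure of $\nabla^2_{\mbp\mbp} L$ (positive-semidefinite Hessian and projector terms against negative-semidefinite path terms) and say you would show domination on the critical cone, but no inequality is established, and since $h$ is only assumed convex (your own half-space example) this is precisely where the difficulty lives. The fallback is logically invalid: \eqref{eq:SOSC} is a \emph{sufficient} condition for strict local minimality, not a necessary one, so a direction $y \in \cF(\mbp^*,\bar\mbx)$ with $y^\top \nabla^2_{\mbp\mbp} L\, y \leq 0$ does not contradict strict local minimality, nor even unique global minimality --- consider $x \mapsto x^4$ at the origin, a strict unique global minimizer at which the second-order condition fails. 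Uniqueness of the global minimizer therefore cannot ``encode'' \eqref{eq:SOSC}, and the contradiction you hope for never materializes. To be fair, the paper's own proof has the same soft spot: it asserts that the singleton hypothesis delivers all four conditions of Theorem \ref{theo:fiacco} without arguing \eqref{eq:SOSC}. So your instinct about where the difficulty sits is exactly right, but your patch does not repair it; as written, the argument needs either an explicit proof that \eqref{eq:SOSC} holds at every $\mbx \in \cX$ with $\cS(\mbx)$ a singleton, or the theorem's hypotheses must be strengthened to assume it.
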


\begin{proof}
    If $\cS(\mbx)$ is a singleton $\{\mbp^\star\}$, the four conditions in Theorem \ref{theo:fiacco} are satisfied. That is, $V$ is continuously differentiable, and its gradient can be expressed as
    \begin{align} 
        \nabla V(\mbx) &= \nabla_\mbx L(\mbp^\star,\lambda^\star,\mbx) \nonumber
        \\
        &= \underbrace{\nabla_\mbx f_\theta(\mbp^\star)}_{= \, \zv} + \sum\nolimits_{i=1}^m \lambda_i^\star \nabla_\mbx g_i(\mbp^\star,\mbx) \nonumber
        \\
        &=
        \begin{bmatrix}
            -\dfrac{\lambda_1^\star}{\nu_1} \dfrac{p_1^\star - x_{A_1}}{\|p_1^\star - x_{A_1}\|}
            \\
            \vdots
            \\
            -\dfrac{\lambda_m^\star}{\nu_m} \dfrac{p_m^\star - x_{A_m}}{\|p_m^\star - x_{A_m}\|}
            \\
            \lp \sum_{i=1}^m \lambda_i^\star \rp \dfrac{p_1^\star - x_{D}}{\|p_1^\star - x_{D}\|}
        \end{bmatrix}, \label{eq:grad}
    \end{align}
    where $(\mbp^\star,\lambda^\star,\mbx)$ is the unique triple satisfying the KKT conditions in \eqref{eq:kkt}. Substituting \eqref{eq:grad} into \eqref{eq:opt_hamiltonian} readily yields
    \begin{align}
        \bar H(\nabla V(\mbx)) &= \len \lp \sum\nolimits_{i=1}^m \lambda_i^\star \rp \dfrac{p_1^\star - x_{D}}{\|p_1^\star - x_{D}\|} \ren \nonumber
        \\
        &\quad - \sum\nolimits_{i=1}^m \nu_i \len \lp -\frac{\lambda_i^\star}{\nu_i} \rp \frac{p_i^\star - x_{A_i}}{\|p_i^\star - x_{A_i}\|} \ren \nonumber
        \\
        &= \left| \sum\nolimits_{i=1}^m \lambda_i^\star \right| -\sum\nolimits_{i=1}^m \left| \lambda_i^\star \right| \overset{\eqref{eq:kkt_dualfeas}}{=} 0.
    \end{align}
    Hence \eqref{eq:hji_main} is satisfied. In addition, by substituting \eqref{eq:grad} in \eqref{eq:eqstr_h}, we can retrieve \eqref{eq:eqstr_multi}.

    Next, we check the boundary condition \eqref{eq:hji_boundary}. If $\mbx \in \partial\cC_1$, then $x_{A_1} = x_D$. In this case, the safe-reachable region of $A_1$ degenerates to a singleton $\{p_1^\star\}$, and problem \eqref{eq:main_prob} becomes
    \begin{equation}
        \begin{aligned}
            \min_{\mbp_{2:m}} ~~& \theta_1 h(p_1^\star) + \sum\nolimits_{i=2}^m \theta_i h(p_i)
            \\
            \mathrm{s.t.} ~~& \frac{\|p_i-x_{A_i}\|}{\nu_i} - \sum\nolimits_{j=2}^m \|p_j-p_{j-1}\| \leq 0, ~~ i \geq 2,
        \end{aligned} \nonumber
    \end{equation}
    with $p_1 = p_1^\star$, whose optimal value is by definition equivalent to $\Phi(\mbx)$ (see \eqref{eq:tercost} and \eqref{eq:prob_series}). This completes the proof.
\end{proof}

\begin{remark} \label{rem:sing}
    Problem \eqref{eq:main_prob} often admits more than one (but finitely many) strict global minimizers, in which case Theorem \ref{theo:multi} no longer holds. In such a case, $A$ is endowed with a freedom to choose among more than one equilibrium strategies, all leading to the same Value of the game, while $D$ faces a momentary dilemma as to which strategy $A$ will select. We thus infer that there exists a dispersal surface of attackers in this game, yet its identification is left for future work. Interested readers are referred to \cite{deng_sing} for an in-depth discussion on singular surfaces that exist in a similar class of games with two (homogeneous) attackers.
\end{remark}

\begin{remark}
    As mentioned in Remark \ref{rem:ncvx}, problem \eqref{eq:main_prob} may often have a nonconvex feasible region. In such a case, computing global minimizers or certifying the global optimality of a local minimizer can be challenging. Nevertheless, if a local minimizer $\mbp^*$ satisfies \eqref{eq:SOSC}, then it can be shown using Theorem \ref{theo:fiacco}, as similarly done in the proof of Theorem \ref{theo:multi}, that the associated local optimal value function $W$ defined in \eqref{eq:local_optval} satisfies \eqref{eq:hji_main} (but not necessarily \eqref{eq:hji_boundary}). That is, the value of $W$ will remain constant along the (open-loop) game trajectory generated with the corresponding (suboptimal) equilibrium control policies, i.e., \eqref{eq:eqstr_multi} with $\mbp^*$ replacing $\mbp^\star$.
\end{remark}

\section{Numerical Example} \label{sec:sim}

In this section, we present numerical examples to demonstrate how cooperative behaviors emerge within the attacker team as a consequence of their heterogeneity. The (fixed) parameters selected for the simulation results presented below are as follows: $n=2$, $m=6$, $x_{D}^0 = [2,2]^\top$, $\nu_1 = 1/5.5$, $\nu_2 = 1/6$, $\nu_3 = 1/6.5$, $\nu_4 = 1/7$, $\nu_5 = 1/7.5$, $\nu_6 = 1/8$, $\theta_1 = 1/22$, $\theta_2 = 5/22$, $\theta_3 = 1/22$, $\theta_4 = 6/22$, $\theta_5 = 1/22$, and $\theta_6 = 8/22$ (see also Figure \ref{fig:par}). The initial positions of the attackers are randomly generated such that $\|x_{A_i}^0\| = 10$. The target area is considered to be a closed disk with a radius $r=2$, i.e., $h(x) = \|x\|^2 - 4$.

To better highlight the emergence of cooperative attacker behaviors, we compare our results with those obtained by applying the solution presented in \cite{fu2021strategies} for noncooperative scenarios under identical simulation settings. In \cite{fu2021strategies}, a team of \emph{noncooperative} attackers solve a sequence of optimization problems (for all $i \in [m]$):
\begin{equation*}
    \begin{aligned}
        \min_{p_i} \quad& h(p_i)
        \\
        \mathrm{s.t.} \quad& \frac{\|p_i - x_{A_i}\|}{\nu_i} - \sum\nolimits_{j=1}^{i} \|p_j - p_{j-1}\| \leq 0,
    \end{aligned}
\end{equation*}
which is similar to problem \eqref{eq:prob_series} except that the decisions of attackers are hierarchical only through the constraints. That is, each attacker is indifferent to the objective values of the other attackers. Note also that the weight parameters $\theta_i$ are absent in this model.

The main simulation results are illustrated in Figures \ref{fig:sims} and \ref{fig:statistics}. In Figure \ref{fig:sims}, it can be observed that the attackers with low weight values (i.e., $A_1$, $A_3$, and $A_5$) naturally exhibit sacrificial behaviors to aid other attackers in approaching $\cT$ more closely. This can further be confirmed by the extensive statistical results portrayed in Figure \ref{fig:statistics}, where it is evident that the proximity levels of $A_1$, $A_3$, and $A_5$ (resp., $A_2$, $A_4$, and $A_6$) are noticeably higher (resp., lower) compared to the noncooperative scenario.

\section{Conclusion} \label{sec:concl}

In this paper, we have designed and verified the equilibrium control policies of a single defender and a heterogeneous group of cooperative attackers with conflicting objectives. In the future, we plan to explore a variation of this problem wherein the defender has no prior (or incorrect) information about the speed ratios and degrees of importance of the attackers.

\begin{figure}[h]
    \centering
    \subcaptionbox{Speed ratios}{
    \includegraphics[scale=0.44]{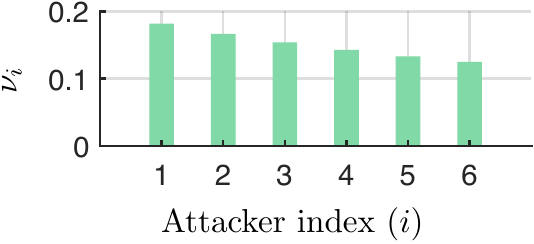}
    }
    \subcaptionbox{Weights}{
    \includegraphics[scale=0.44]{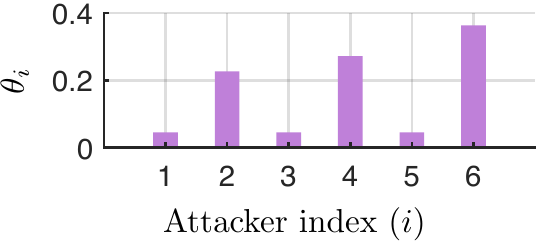}
    }
    \caption{Selected parameter values.}
    \label{fig:par}
\end{figure}

\begin{figure}[h]
    \centering
    \subcaptionbox{Noncooperative \cite{fu2021strategies} \label{fig:simsa}}{
    \includegraphics[scale=0.33]{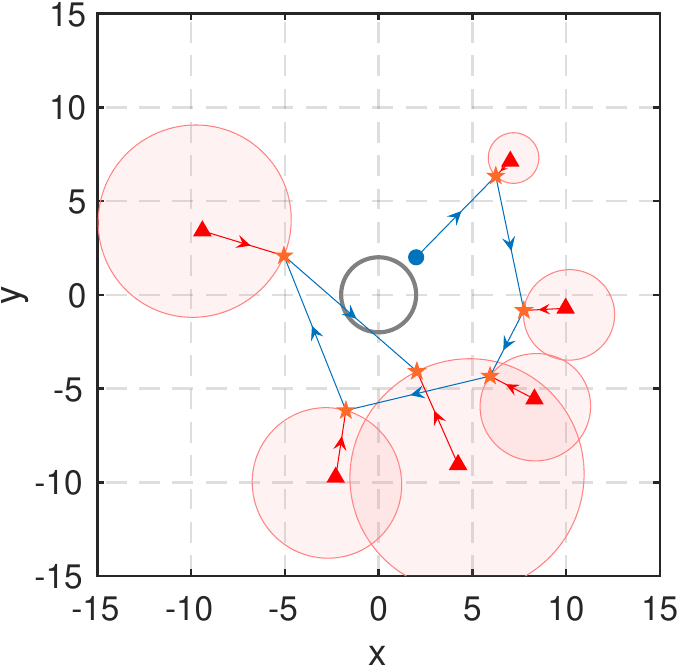}
    }
    \subcaptionbox{Cooperative (ours) \label{fig:simsb}}{
    \includegraphics[scale=0.33]{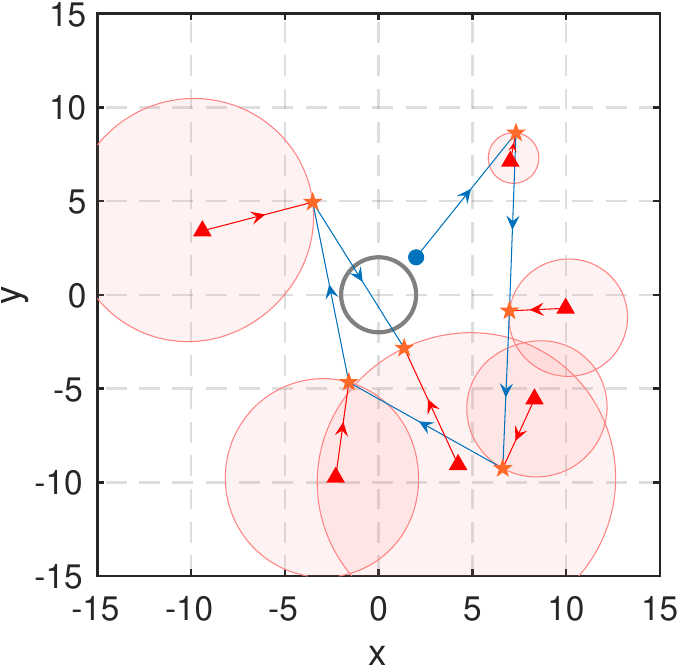}
    }
    \caption{Equilibrium trajectory of the game. The blue and the red markers indicate the defender and the attackers, respectively. The pink regions illustrate the (time-extended) safe-reachable regions of the attackers, whereas the grey circle illustrates the boundary of the target area.}
    \label{fig:sims}
\end{figure}

\begin{figure}[h]
    \centering
    \subcaptionbox{Noncooperative \cite{fu2021strategies} \label{fig:statisticsa}}{
    \includegraphics[scale=0.4]{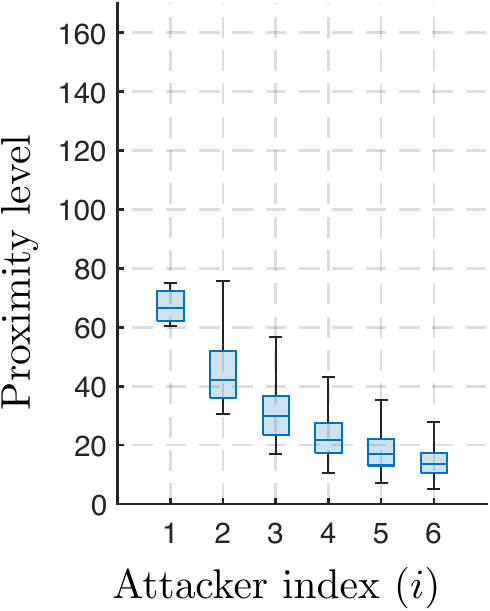}
    }
    \quad
    \subcaptionbox{Cooperative (ours) \label{fig:statisticsb}}{
    \includegraphics[scale=0.4]{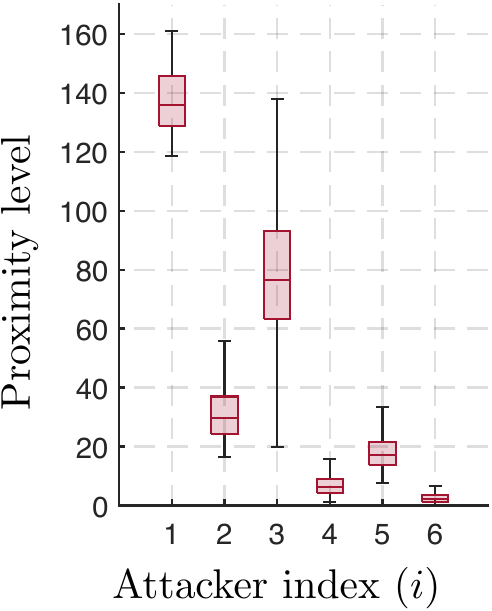}
    }
    \caption{Proximity levels of optimal capture points for $10^3$ randomly generated initial attacker positions.}
    \label{fig:statistics}
\end{figure}

\bibliographystyle{ieeetr}
\bibliography{root}

\begin{thebibliography}{10}

\bibitem{cuas}
J.~Wang, Y.~Liu, and H.~Song, ``Counter-{U}nmanned {A}ircraft {S}ystem(s) ({C-UAS}): State of the art, challenges, and future trends,'' {\em IEEE Aerosp. Electron. Syst. Mag.}, vol.~36, no.~3, pp.~4--29, 2021.

\bibitem{isaacs1965differential}
R.~Isaacs, {\em Differential Games: A Mathematical Theory with Applications to Warfare and Pursuit, Control and Optimization}.
\newblock New York, NY: Wiley, 1965.

\bibitem{adler}
A.~Adler, O.~Mickelin, R.~K. Ramachandran, G.~S. Sukhatme, and S.~Karaman, ``The role of heterogeneity in autonomous perimeter defense problems,'' in {\em Algorithmic Foundations of Robotics XV} (S.~M. LaValle, J.~M. O'Kane, M.~Otte, D.~Sadigh, and P.~Tokekar, eds.), (Cham), pp.~115--131, Springer International Publishing, 2023.

\bibitem{shishika2020cooperative}
D.~Shishika, J.~Paulos, and V.~Kumar, ``Cooperative team strategies for multi-player perimeter-defense games,'' {\em IEEE Robot. Autom. Lett.}, vol.~5, no.~2, pp.~2738--2745, 2020.

\bibitem{9743548}
S.~Velhal, S.~Sundaram, and N.~Sundararajan, ``A decentralized multirobot spatiotemporal multitask assignment approach for perimeter defense,'' {\em IEEE Trans. Robot.}, vol.~38, no.~5, pp.~3085--3096, 2022.

\bibitem{VonMoll2020BD}
A.~Von~Moll, E.~Garcia, D.~Casbeer, M.~Suresh, and S.~C. Swar, ``Multiple-pursuer, single-evader border defense differential game,'' {\em J. Aerosp. Inf. Syst.}, vol.~17, no.~8, pp.~407--416, 2020.

\bibitem{yan_matching}
R.~Yan, X.~Duan, Z.~Shi, Y.~Zhong, and F.~Bullo, ``Matching-based capture strategies for 3{D} heterogeneous multiplayer reach-avoid differential games,'' {\em Automatica}, vol.~140, p.~110207, 2022.

\bibitem{garcia_multiple}
E.~Garcia, D.~W. Casbeer, A.~Von~Moll, and M.~Pachter, ``Multiple pursuer multiple evader differential games,'' {\em IEEE Trans. Automat. Contr.}, vol.~66, no.~5, pp.~2345--2350, 2021.

\bibitem{lee_cdc}
Y.~Lee and E.~Bakolas, ``Optimal strategies for guarding a compact and convex target set: A differential game approach,'' in {\em 2021 IEEE Conf. Decis. Contr. (CDC)}, pp.~4320--4325, 2021.

\bibitem{lee2021guarding}
Y.~Lee and E.~Bakolas, ``Guarding a convex target set from an attacker in {Euclidean} spaces,'' {\em IEEE Contr. Syst. Lett.}, vol.~6, pp.~1706--1711, 2021.

\bibitem{moll_circ}
A.~Von~Moll, M.~Pachter, D.~Shishika, and Z.~Fuchs, ``Circular target defense differential games,'' {\em IEEE Trans. Automat. Contr.}, vol.~68, no.~7, pp.~4065--4078, 2023.

\bibitem{FU2023110811}
H.~Fu and H.~H.-T. Liu, ``Justification of the geometric solution of a target defense game with faster defenders and a convex target area using the {HJI} equation,'' {\em Automatica}, vol.~149, p.~110811, 2023.

\bibitem{lee_2024}
Y.~Lee and E.~Bakolas, ``Solutions for multiple-defender single-attacker convex target guarding games: Verification via parametric optimization,'' {\em IEEE Trans. Automat. Contr.}, pp.~1--8, 2024.

\bibitem{DOROTHY2024111587}
M.~Dorothy, D.~Maity, D.~Shishika, and A.~{Von Moll}, ``One apollonius circle is enough for many pursuit-evasion games,'' {\em Automatica}, vol.~163, p.~111587, 2024.

\bibitem{liang_recon}
L.~Liang, F.~Deng, J.~Wang, M.~Lu, and J.~Chen, ``A reconnaissance penetration game with territorial-constrained defender,'' {\em IEEE Trans. Automat. Contr.}, vol.~67, no.~11, pp.~6295--6302, 2022.

\bibitem{lee_recon}
Y.~Lee and E.~Bakolas, ``Two-player reconnaissance game with half-planar target and retreat regions,'' in {\em 2023 Amer. Contr. Conf. (ACC)}, pp.~3344--3349, 2023.

\bibitem{Das2022}
G.~Das and D.~Shishika, ``Guarding a translating line with an attached defender,'' in {\em 2022 Amer. Contr. Conf. (ACC)}, pp.~4436--4442, 2022.

\bibitem{deng_flow}
R.~Deng, W.~Zhang, R.~Yan, Z.~Shi, and Y.~Zhong, ``Multiple-pursuer single-evader reach-avoid games in constant flow fields,'' {\em IEEE Trans. Automat. Contr.}, vol.~69, no.~3, pp.~1789--1795, 2024.

\bibitem{LIANG201958}
L.~Liang, F.~Deng, Z.~Peng, X.~Li, and W.~Zha, ``A differential game for cooperative target defense,'' {\em Automatica}, vol.~102, pp.~58--71, 2019.

\bibitem{8340791}
E.~Garcia, D.~W. Casbeer, and M.~Pachter, ``Design and analysis of state-feedback optimal strategies for the differential game of active defense,'' {\em IEEE Trans. Automat. Contr.}, vol.~64, no.~2, pp.~553--568, 2019.

\bibitem{10.3389/fcteg.2022.1093186}
R.~Yan, R.~Deng, X.~Duan, Z.~Shi, and Y.~Zhong, ``Multiplayer reach-avoid differential games with simple motions: A review,'' {\em Front. Contr. Eng.}, vol.~3, 2023.

\bibitem{9029340}
E.~Garcia, A.~V. Moll, D.~W. Casbeer, and M.~Pachter, ``Strategies for defending a coastline against multiple attackers,'' in {\em 2019 IEEE Conf. Decis. Contr. (CDC)}, pp.~7319--7324, 2019.

\bibitem{yan2021cooperative}
R.~Yan, Z.~Shi, and Y.~Zhong, ``Cooperative strategies for two-evader-one-pursuer reach-avoid differential games,'' {\em Int. J. Syst. Sci.}, vol.~52, no.~9, pp.~1894--1912, 2021.

\bibitem{deng_sing}
R.~Deng, Z.~Shi, and Y.~Zhong, ``Reach-avoid games with two cooperative attackers: Value function and singular surfaces,'' {\em IEEE Trans. Aerosp. Electron. Syst.}, vol.~60, no.~1, pp.~558--573, 2024.

\bibitem{fu2021strategies}
H.~Fu and H.~H. Liu, ``On the strategies of defending a target against multiple intruders,'' in {\em AIAA SciTech 2021 Forum}, p.~1861, 2021.

\bibitem{zepp2022autonomous}
N.~D. Zepp, H.~Fu, and H.~H. Liu, ``Autonomous strategic defense: An adaptive clustering approach to capture order optimization,'' in {\em AIAA SciTech 2022 Forum}, p.~2215, 2022.

\bibitem{pourghorban2022target}
A.~Pourghorban, M.~Dorothy, D.~Shishika, A.~Von~Moll, and D.~Maity, ``Target defense against sequentially arriving intruders,'' in {\em 2022 IEEE Conf. Decis. Contr. (CDC)}, pp.~6594--6601, IEEE, 2022.

\bibitem{bajaj_tro}
S.~Bajaj, S.~D. Bopardikar, E.~Torng, A.~V. Moll, and D.~W. Casbeer, ``Multi-vehicle perimeter defense in conical environments,'' {\em IEEE Trans. Robot.}, pp.~1--18, 2024.

\bibitem{bacsar1998dynamic}
T.~Ba{\c{s}}ar and G.~J. Olsder, {\em Dynamic {N}oncooperative {G}ame {T}heory}.
\newblock SIAM, 1998.

\bibitem{bertsekas2016nonlinear}
D.~Bertsekas, {\em Nonlinear Programming}.
\newblock Athena scientific optimization and computation series, Athena Scientific, 2016.

\bibitem{shen_2016}
X.~Shen, S.~Diamond, Y.~Gu, and S.~Boyd, ``Disciplined convex-concave programming,'' in {\em 2016 IEEE 55th IEEE Conf. Decis. Contr. (CDC)}, pp.~1009--1014, 2016.

\bibitem{fiacco}
A.~V. Fiacco, {\em Introduction to sensitivity and stability analysis in nonlinear programming}.
\newblock Mathematics in science and engineering : a series of monographs and textbooks, Academic Press, 1983.

\end{thebibliography}

\end{document}